%% file: main.tex
\documentclass[12pt]{article}
\usepackage[margin=1in]{geometry}
\usepackage{hyperref}
\usepackage{bbding}

\usepackage[english]{babel}
\usepackage{amsmath,amsthm,bm}
\usepackage{amssymb}
\usepackage{amsfonts}

\usepackage{algorithm}  
\usepackage{algorithmicx}  
\usepackage{algpseudocode}  

\usepackage{color}

\usepackage{CJKutf8}
\usepackage{graphicx}
\usepackage{pgf,tikz}
\usepackage{enumitem}
\usetikzlibrary{shapes,arrows,automata}

\input{mycmd}

\begin{document}
\title{A Top-Down Depth-Four Circuit Lower Bound for Majority}%

\author{Hao Wu\footnote{Corresponding author. College of Information Engineering, Shanghai Maritime University, Shanghai, China. My email is \texttt{haowu@shmtu.edu.cn}, you can also reach me via \texttt{ wealk@outlook.com}.}
\and Yaqiao Li\footnote{Shenzhen University of Advanced Technology, Shenzhen, China, liyaqiao@suat-sz.edu.cn}
}%

\maketitle

\begin{abstract}
We present a top-down depth-four circuit lower bound for Majority function by extending recent work of G\"{o}\"{o}s, Riazanov, Sofronova, and Sokolov (FOCS 2023), who gave a top-down proof of a depth-four circuit lower bound for Parity which relies on the robust sunflower to construct a mirror set and the block unpredictability to find the local limits. The main challenge for the case of Majority is to construct a corresponding mirror set, the  difference is that to flip the value of Majority function, one may have to flip many bits of the input Boolean string, while for Parity, flipping one bit suffices. We avoid this flipping by considering slices of the Boolean cube, that is, Boolean strings of fixed Hamming weight approximately $n/2$.
\end{abstract}


\section{Introduction}
Proving strong circuit lower bounds remains one of the central challenges in complexity theory. One of the most successful methods is the random restriction which analyzes the circuit in a bottom-up way. With this method, a celebrated line of works \cite{DBLP:journals/apal/Ajtai83,DBLP:journals/mst/FurstSS84,10.5555/4479.4487, has87,haastad2014correlation} show that to compute functions of $n$ variables such as Parity or Majority requires a depth-$d$ unbounded fan-in circuit of size  at least $2^{\Omega(n^{1/(d-1)})}$. 

Another major methodology to prove circuit lower bounds is the top-down method. To prove a  Boolean function $f$ requires a large  circuit to compute, the top-down  approach works as follows. Assume for the sake of a contradiction that there exists a small circuit for function $f$, starting from the top of the circuit, we  choose a sub-circuit and then iterate the process down to the bottom. Note that every sub-circuit computes a partial function consistent with the original function $f$, in other word, every sub circuit has to distinguish some set  $X\subseteq f^{-1}(1)$ from another set $Y\subseteq f^{-1}(0)$. Along the way, eventually we will reach a contradiction  that certain sub-circuit fails to  distinguish its $X$ from $Y$. 

The merit of the top-down approach is that it is complete for proving constant depth circuit lower bound and has the potential to prove circuit lower bound with modular gates, see \cite{hirahara2017duality,goos2023depth,DBLP:conf/innovations/RiazanovS026}. The difficulty of this approach is that we choose the sub-circuit in an ``online'' fashion while ignoring any structure information of lower layers. For many years, we only have top-down depth-three lower bounds. For example,  H\aa{}stad, Jukna, and Pudl\'{a}k \cite{haastad1995top} demonstrated a top-down method to prove that any depth-three circuit computing $n$-bit Parity or Majority requires size $2^{\Omega(\sqrt{n})}$. Meir and Wigderson \cite{DBLP:journals/cc/MeirW19} provided another top-down proof for this result by inventing an information theoretical unpredictability lemma. 

Recently, a breakthrough was achieved by G\"{o}\"{o}s, Riazanov, Sofronova, and Sokolov \cite{goos2023top}, who established the first top-down proof of the $2^{n^{1/3-o(1)}}$ size lower bound for depth-four boolean circuits computing the $n$-bit Parity function.  G\"{o}\"{o}s et al.\ overcame this long-standing barrier by introducing mainly two new components: the construction of a mirror set, and a novel block unpredictability lemma that generalizes the work of Meir and Wigderson \cite{DBLP:journals/cc/MeirW19}. However,  their lower bound only works for the case of Parity,  it is natural to adapt their method for other cases; a good candidate is the Majority function. Below we list the main steps of their proof for Parity, and then explain the challenge to extend it to the case of Majority.

\begin{description}
	\item[Step one:] Assume that we have an $\wedge$-gate on top. We greedily choose  a sub-circuit which has to distinguish  $\oplus^{-1}(1)$ from $Y\subseteq \oplus^{-1}(0)$ where the size of $Y$ is maximized among all the sub-circuits in the same layer. 
		\item[Step two:] Construct a mirror set $M \subseteq \oplus^{-1}(1)$ where for every $x$ in the mirror set $M$, if we flip some bits in a random block $R\subseteq [n],|R|=r$, then with overwhelming probability, it turns $x$ into some  $y\in Y$. Then we greedily choose  a sub-circuit which has to distinguish  $M'\subseteq M$ from $Y$ where the size of $M'$ is maximized.
		\item[Step three:] Construction of the set of $q$-local limits $Y' \subseteq Y$ where for every $y\in Y'$, if we only query $q$ bits in $y$, there exists an $x\in M'$ such that if we query the same $q$ coordinates of $x$, they are the same. Intuitively, we cannot distinguish $y$ from $M'$ locally. Then we  greedily choose  a sub-circuit which has to distinguish  $M'$ from $Y''\subseteq Y'$ where the size of $Y''$ is maximized.
		\item[Reaching a contradiction:] After the three steps, the size of $Y''$  is large enough and  we obtain an $\vee$-clause of (at most) $q$ literals which has to  distinguish  $M'$ from $Y''\subseteq Y'$, which is a contradiction since every $y\in Y''$ is a $q$-local limit for $M'$.
	\end{description}
    
The main challenge to prove a lower bound for  the Majority function $\texttt{Maj}$ is to construct the corresponding mirror set. In the case of Parity function $\oplus$,  to find an element $x'$ in the mirror set $M\subseteq \oplus^{-1}(1)$, firstly, one has to find certain point $x\in \oplus^{-1}(1)$ and shift $x$ to $x''$ by adding some shift vector $a$, if $x''$ is not in $\oplus^{-1}(1)$, we have to flip one bit in $x''$ to obtain $x'$, in order to make sure that $x'$ is in $\oplus^{-1}(1)$. This argument does not work for the Majority function because to  flip the value of Majority function, one may have to flip many bits of the string $x''$,  while for Parity,  flipping one bit suffices. We avoid this flipping by considering Boolean strings of fixed Hamming weight slightly larger than $n/2$,  and make sure the  Hamming weight of the shift vector $a$ is small so that after shifting, it is still in $\texttt{Maj}^{-1}(1)$ thus avoiding the flipping. Formally, we have the following top down depth-four unbounded fan-in circuit lower bound for the Majority function $\texttt{Maj}$.

\begin{theorem}\label{thm1.1}
    Every depth-$4$  circuit computing the $n$-bit Majority function  requires at least $2^{n^{1/3-o(1)}}$ gates.
\end{theorem}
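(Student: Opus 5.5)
We follow the three-step top-down template of G\"{o}\"{o}s, Riazanov, Sofronova and Sokolov, working on slices. Fix $k=\lceil n/2\rceil+t$ for a small slack $t$ (say $t\approx n^{1/3}$, sublinear). We work inside $\texttt{Maj}^{-1}(1)\supseteq \binom{[n]}{k}$ and $\texttt{Maj}^{-1}(0)\supseteq \binom{[n]}{k'}$ with $k'=\lfloor n/2\rfloor-t'$. Suppose a depth-$4$ circuit of size $s=2^{n^{1/3-\epsilon}}$ computes $\texttt{Maj}$. By negation symmetry of $\texttt{Maj}$ (complementing inputs and outputs), we may assume the top gate is $\wedge$.

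\textbf{Step one.} Every $y$ in the lower slice $Y_0=\binom{[n]}{k'}$ is rejected by some depth-$3$ $\vee$-subcircuit. By averaging, one subcircuit $C_1$ rejects a set $Y\subseteq Y_0$ of density at least $1/s$, while accepting all of $\texttt{Maj}^{-1}(1)$.

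\textbf{Step two (mirror set).} This is the main obstacle. We view $Y$ as a family of $k'$-sets and apply the robust sunflower lemma (assumed available from the GRSS framework, possibly in its slice version), to random blocks $R\subseteq[n]$ with $|R|=r$. The goal is a core-type structure: for a large set $M$ of strings $x$ in the upper slice (or in a nearby slice with weight still $>n/2$), flipping suitable bits inside a random block $R$ lands in $Y$ with overwhelming probability. The key difference from Parity is that we never repair parity by a single flip. Instead, the shift vector $a$ has support inside $R$, and we make the number of ones it removes exceed the number it adds by exactly $k-k'$. Since $r$ is small, this is achieved by choosing $k-k'\le r$ and balancing within $R$. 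Mirror points then stay in $\texttt{Maj}^{-1}(1)$ automatically, because their weight is fixed at $k>n/2$.

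The technical work is to redo the robust sunflower and spreadness estimates for the uniform measure on slices rather than product measure. We handle this by comparing the slice to a product distribution with bias $k/n$ (losing only $\mathrm{poly}(n)$ factors) and by using that a random $r$-subset of a uniform slice point is close to i.i.d.\ when $r\ll\sqrt n$. Then $C_1$'s $\vee$ top distinguishes $M$ from $Y$, so some depth-$2$ $\wedge$-subcircuit $C_2$ accepts $M'\subseteq M$ of relative size $\ge 1/s$ and rejects $Y$.

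\textbf{Step three and contradiction.} We apply the block unpredictability lemma to $M'$, viewed through the block structure induced by $R$, to obtain the set $Y'\subseteq Y$ of $q$-local limits of $M'$ with $q\approx n^{1/3}$. The lemma's entropy argument needs $M'$ to have high min-entropy within the slice; the $\log s=n^{1/3-\epsilon}$ loss is small against the $\Theta(n)$ slice entropy, so the same argument carries over up to lower-order terms. Since $Y'$ retains density $\ge 1/\mathrm{poly}(s)$, some bottom $\vee$-clause of $C_2$ (width $\le q$ after a standard bottom-fan-in reduction, or argued as in GRSS) rejects some $y\in Y''\subseteq Y'$ and accepts all of $M'$. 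But $y$ agrees with some $x\in M'$ on the clause's $q$ variables, so the clause accepts $y$, a contradiction. Balancing $r,q,t$ against $\log s$ gives $s\ge 2^{n^{1/3-o(1)}}$.
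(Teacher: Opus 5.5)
\textbf{The gap is Step two.} You rightly call it the main obstacle, but you do not carry it out. What must be proved is that for every mirror point $x'$ the family $Y-x'$ is $(r/n,o(1))$-satisfying, i.e.\ for most random blocks $R$ some $y\in Y$ agrees with $x'$ outside $R$. $Y$ is an arbitrary $1/s$-dense subset of the slice, so you cannot simply choose a shift vector and ``balance within $R$''. The existence of such a $y$ is exactly what has to be shown, and the robust sunflower lemma (Lemma~\ref{lem2.3}) gives it only from spreadness of a uniform family, which your proposal never produces. Viewing $Y$ itself as a family of $k'$-sets is the wrong family; the relevant one is a difference family $Y-x$. The slice-versus-product comparison you call the technical work is beside the point. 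Lemma~\ref{lem2.3} is already stated for uniform families and uniformly random $R$ of fixed size. Lemma~\ref{lem2.6} applies verbatim, because a slice has size $2^{n-O(\log n)}$. Finally, your assertion that mirror points have weight fixed at $k$ is incompatible with the mechanism that actually works.

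\textbf{The missing idea is a kernel argument.} Put $Y$ in slice $\lfloor n/2\rfloor$ and take $x$ in slice $\ell=\lfloor n/2\rfloor+m$, with $m=n^{1/3}$. Every $y\in\mathbb{S}_m(x)\cap Y_0$ has $I_y\subset I_x$, so $A=(\mathbb{S}_m(x)\cap Y)-x$ is an $m$-uniform family on $I_x$. Averaging (Claim~\ref{claim3.4}) gives many $x$ with $|A|\ge\binom{\ell}{m}/(2s)$. Let $I$ be a maximal set violating $\kappa^{1-\epsilon}$-spreadness of $A$, where $\kappa=\ell/m$ and $\epsilon=1/\log m$. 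The density of $A$ forces $\kappa^{-\epsilon|I|}\ge 1/(2s)$, hence $|I|\le m^{1-\delta}<m$. By maximality, the link $\{a\in A: a_I=\bm{1}_I\}-\bm{1}_I$ is spread, hence satisfying with $r=\Theta(m\log m)$, and it lies in $Y-x'$ for $x'=x-\bm{1}_I$. So the mirror point $x'$ has weight $\ell-|I|$, not $k$. It stays in $\texttt{Maj}^{-1}(1)$ only because the kernel is smaller than the slack $m$, and this is what replaces the one-bit parity repair. Your parenthetical about a nearby slice gestures at this, but you then commit to fixed weight and give no reason for a small kernel. $M$ is large because $x\mapsto x'$ is at most $\sum_{i\le m^{1-\delta}}\binom{n}{i}$-to-one.

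\textbf{A quantitative error in Step three.} Each $y$ can arise from up to $\sum_{i\le r}\binom{n}{i}=2^{\Theta(r\log n)}$ points $x$. So you can only guarantee $|Y'|\ge 2^{n-\Theta(m\log^2 m)}$, not a $1/\mathrm{poly}(s)$ fraction. This forces $q=\Theta(m\log^2 m)$, and the constraint $kqr=o(n)$ with $k=m^{1-\Theta(\delta)}$ is what yields the $n^{1/3-o(1)}$ exponent. The width bound on the final clause comes directly from $|Y_3|\ge 2^{n-q}$, with no bottom fan-in reduction needed.
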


Our result can be easily extended to the case of threshold functions with linear threshold, and also the case of slice function.
The rest of the paper is organized as follows. We provide basic notations and necessary tools in Section ~\ref{section2}. We prove our main result Theorem \ref{thm1.1} in Section~\ref{section3}. Finally, in Section~\ref{section4} we conclude and make some discussion.

\section{Notations and tools}\label{section2}

At first, we introduce some basic notations. Let $[n]$ denote the set $\{1,\ldots,n\}$, $2^{[n]}$ the power set of $[n]$. Denote $\{S\subseteq [n] \mid |S| =r\}$ by $\binom{[n]}{r}$. Given a  Boolean vector $x \in \{0,1\}^n$, define $I_x$ to be the set $\{i|x_i =1\}$. Conversely, we say $x$ is the indicator vector for $I_x$. Sometimes for convenience, we will use $x$ and $I_x$ interchangeably. Given a  Boolean vector $x \in \{0,1\}^n$ and a set of coordinates $I\subseteq [n]$ , denote $x$ restricted to $I$ by $x_I$. Given a set $X\subseteq \{0,1\}^n$, denote the set $\{x_I\mid x\in X\}$ by $X_I$. We often view $x \in \{0,1\}^n$ as a vector of $\mathbb{F}_2^n$, the Hamming weight of $x$ denoted by  $\|x\|_0$ is the number of ones in $x$, and Hamming distance between two vectors $x,y\in \{0,1\}^n$ is denoted by $\|x-y\|_0$.  Given a  Boolean vector $x \in \{0,1\}^n$ and a set $Y\subseteq \{0,1\}^n$, $Y-x$ is the set $\{y-x\mid y\in Y\}$. Now we introduce some key definitions and tools.

\begin{defn}[$(p, \varepsilon)$-satisfying \cite{Ross14,alwz21,rao20}]Given a set family $\mathcal{A}\subseteq 2^{[n]}$, $p
	=r/n, r\in [n],0<\varepsilon<1$, we say $\mathcal{A}$ is  $(p, \varepsilon)$-satisfying, if
	\[
	\Pr_{\bm{R}\sim \binom{[n]}{pn} }[\exists A\in \mathcal{A}: A\subseteq \bm{R}]\geq 1-\varepsilon.
	\]
\end{defn}

\begin{defn}[Spreadness\cite{alwz21,rao20}]Given a set family $\mathcal{A}\subseteq 2^{[n]}$, we say $\mathcal{A}$ is  $\kappa$-spread, if for every $I\subseteq [n]$
	\[\Pr_{\bm{A}\sim \mathcal{A}}[I\subseteq \bm{A}]\leq \kappa^{-|I|}.\]
\end{defn}

The robust sunflower lemma says that  spreadness implies satisfying.

\begin{lemma}[\cite{rao20}]\label{lem2.3}There exists a universal constant $C > 0$ such that the following holds. Given a set family $\mathcal{A}\subseteq \binom{[n]}{r}$, if $\mathcal{A}$ is  $\kappa$-spread for $\kappa = (C/p) \log(r/\varepsilon)$. Then $\mathcal{A}$ is $(p, \varepsilon)$-satisfying.
\end{lemma}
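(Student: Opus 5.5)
The plan is to prove Lemma~\ref{lem2.3} by the iterative ``shrinking'' argument of Alweiss--Lovett--Wu--Zhang in the streamlined form due to Rao. Write $m=\lceil \log_2 r\rceil+1$. We will sample $\bm{R}$ as a union $\bm{W}_1\cup\cdots\cup\bm{W}_m$ of independent uniformly random sets, each of size about $pn/m$. A union of independent random sets is dominated by a uniform set of size at most $pn$, and the property $\exists A\subseteq \bm R$ is monotone, so it suffices to bound the failure probability for this union. The induction keeps a current family $\mathcal{A}_j$ whose sets have size at most $r_j$, with $r_0=r$, together with the invariant that every $B\in\mathcal{A}_j$ satisfies $B\cup \bm{W}_1\cup\cdots\cup\bm{W}_j\supseteq A$ for some $A\in\mathcal{A}$. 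We will arrange $r_{j}\le r_{j-1}/2$, so that after $m$ rounds $\emptyset\in\mathcal{A}_m$. This means some $A\in\mathcal A$ lies inside $\bm R$.

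The core is a one-round lemma. Let $\mathcal{B}$ be a $\kappa$-spread family of sets of size at most $k$, and let $\bm W$ be uniform of density $\delta$. For each $B\in\mathcal{B}$, define $\chi(B,\bm W)$ to be a set $B'\setminus \bm W$ of minimum size, taken over $B'\in\mathcal{B}$ with $B'\subseteq B\cup \bm W$. The claim to aim for is
\[
\Pr_{\bm B\sim\mathcal B,\ \bm W}\bigl[\,|\chi(\bm B,\bm W)|\ge t\,\bigr]\le \Bigl(\tfrac{C_0}{\kappa\delta}\Bigr)^{t}
\]
for some absolute constant $C_0$, with $t=k/2$. The proof is by encoding. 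A bad pair $(B,W)$ is determined by the set $W\cup B$, by the position of $T=\chi(B,W)$ inside the minimizer $B'$ (at most $2^k$ choices), and by $B$ given $T\subseteq B$. Spreadness bounds the number of $B\in\mathcal{B}$ containing a fixed $T$ by $\kappa^{-|T|}|\mathcal B|$. Passing from $W$ to $W\cup B$ costs a factor of roughly $\delta^{-|T|}$, by comparing binomial coefficients $\binom{n}{\delta n+j}/\binom{n}{\delta n}$. Multiplying these gives the bound. One also has to check that $\mathcal{A}_j$ remains spread enough. The clean way is to never pass to a subfamily and to apply the lemma to the original $\mathcal{A}$ with the accumulated $W$. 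Alternatively, one can note that sets of the form $\chi(B,W)$ inherit spreadness up to a constant factor.

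Taking $\kappa=(C/p)\log(r/\varepsilon)$ and $\delta=p/m$ makes $C_0/(\kappa\delta)\le 1/4$ once $C$ is large. The per-round failure probability, averaged over $\bm B$, is then $2^{-\Omega(r_j)}$. Markov's inequality converts this to a bound on the probability over $\bm W$ that more than half of the current sets fail to shrink, and we keep the sets that do shrink. In the last rounds $r_j$ is $O(\log(m/\varepsilon))$. There the geometric bound alone is too weak, and the $\log(r/\varepsilon)$ factor in $\kappa$ is exactly what drives the failure to $\le \varepsilon/m$. A union bound over the $m$ rounds then gives total failure at most $\varepsilon$.

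The main obstacle is this final-rounds bookkeeping: the encoding bound must hold for small $t$ with failure $\varepsilon/m$ rather than a constant. The first thing to try is to stop halving once $r_j\le \log(m/\varepsilon)$ and finish with a single round of density $p/2$. For that round, apply the encoding bound with $t=1$ directly, where spreadness gives $\Pr[B\not\subseteq W]$ small via the factor $(C_0/(\kappa p))^{t}$ with $\kappa p\gtrsim \log(r/\varepsilon)$.
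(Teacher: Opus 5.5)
The paper does not prove Lemma~\ref{lem2.3}; it quotes it from Rao. So your sketch has to stand on its own as a reconstruction of Rao's argument. The overall shape, iterated $\chi$-shrinking over independent slices $W_1,\dots,W_m$, is right. The encoding step, however, is wrong as written. Relative to $W$, the set $W\cup B$ costs a factor $\delta^{-|B\setminus W|}$, which can be as large as $\delta^{-k}$, not $\delta^{-|T|}$. Also, the decoder cannot find ``the minimizer $B'$'' from $W\cup B$, because $B'$ depends on $W$ itself.

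What makes the count work is to encode $U=W\cup T$ (which equals $W\cup B'$) and use the minimality of $T$. Every $B''\in\mathcal B$ with $B''\subseteq U$ satisfies $B''\setminus W\subseteq T$ and $|B''\setminus W|\ge|T|$, hence $B''\setminus W=T$. So the decoder picks a canonical such $B''$, which is a function of $U$ alone, reads $T$ as a subset of $B''$, and recovers $W=U\setminus T$.

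The more serious gap is the $\varepsilon$-dependence, which is exactly the step you flag. One round of density $\delta$ on sets of size $s$ only gives $\Pr[\chi\neq\emptyset]\lesssim\sum_{t\ge1}\binom{s}{t}(C_0/(\kappa\delta))^t\approx sC_0/(\kappa\delta)$. That is polynomial in $1/(\kappa\delta)$, never exponential.
\begin{itemize}
\item With $\kappa p=C\log(r/\varepsilon)$ and $s\approx\log(m/\varepsilon)$, your final round of density $p/2$ fails with probability of order $C_0\log(m/\varepsilon)/(C\log(r/\varepsilon))$. For small $\varepsilon$ (e.g.\ $\varepsilon\le 1/r$) this is a constant, not $\varepsilon$.
\item For the same reason, the larger $\kappa$ cannot push the last halving rounds down to failure $\varepsilon/m$. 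At $r_j=2$ you only get $O(C_0\log r/(C\log(r/\varepsilon)))$.
\item Keeping only the shrinking half each round multiplies $\Pr[I\subseteq\cdot\,]$ by $2$ per round. This compounds to about $r/r_j$, which swamps $(C_0/(\kappa\delta))^{r_j/2}$ once $r_j$ is small.
\item Re-applying the one-round lemma to the original $\mathcal A$ with the accumulated $W$ never certifies anything below threshold $r/2$.
\end{itemize}

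In Rao's argument the $\log(1/\varepsilon)$ pays for extra rounds, not for a stronger per-round bound. Run $m=\lceil\log_2(r/\varepsilon)\rceil$ rounds of density about $p/m$, so that $\kappa\delta\ge 4$ once $C$ is large. Never discard anything. Instead keep the indexed family $\{B_j(A)\}_{A\in\mathcal A}$, which stays $\kappa$-spread because $B_j(A)\subseteq A$.

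Then prove the relative bound $\mathbb{E}|B_{j+1}(A)|\le\mathbb{E}|B_j(A)|/\log_2(\kappa\delta)$. One way: minimize only over candidates of size at most $|B_j(A)|$, take $B''$ to be the smallest set of the family inside $U$, and pay $|B''|\le|B_j(A)|$ bits for $T$. Iterating gives $\mathbb{E}|B_m(A)|\le r2^{-m}\le\varepsilon$. Failure forces $B_m(A)\neq\emptyset$ for every $A$, so Markov bounds the failure probability by $\varepsilon$.
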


The robust sunflower lemma will be used to construct the mirror set. Now we define the notion of local limit. Intuitively, it means we cannot distinguish a string from a set of strings if we only locally query a limited number of coordinates.

 \begin{defn}[$q$-local limit\cite{goos2023top}]Given $X\subseteq \{0,1\}^n, y\in \{0,1\}^n$, we say $y$ is a $q$-local limit for $X$ if for every $Q\subseteq [n],|Q|=q$, there exists some $x\in X$ such that $x_Q =y_Q$. Intuitively, we cannot distinguish $y$ from $X$ if we only locally query $q$ coordinates in $y$.
 \end{defn}
 
 Now we define the certificate for block of coordinates and introduce the block unpredictability lemma.
 
 \begin{defn}[\cite{goos2023top}] Let  $X \subseteq \{0,1\}^n$ be a set of Boolean vectors and $R\in \binom{[n]}{r}$ be a block of coordinates. Given a pair $(Q, a)$ such that $Q \subseteq [n]\setminus R, a \in \{0, 1\}^Q$, let $X_{Q=a} :=\{x\in X\mid x_Q =a\}$, we say that $(Q, a)$ is a certificate for $R$
  with respect to $X$ if $(X_{Q=a})_R \neq \{0,1\}^R$. Intuitively, the certificate $(Q, a)$ predicts that some string in $\{0,1\}^R$ is missing in the set $(X_{Q=a})_R$.
 Furthermore, we say $x \in X$ contains a size-$q$ certificate $(Q,a)$ for $R$ w.r.t. $X$, if $(Q,a)$ is a certificate of  $R$ such that $|Q|=q,x_Q =a$.
 \end{defn}
 
\begin{lemma}[\cite{goos2023top}]\label{lem2.6}Let $X \subseteq \{0,1\}^n$, and $|X|\geq 2^{n-k}$. For every $r,q\geq 1$, we have
\[
	\Pr_{(\bm{x,R})\sim X\times \binom{[n]}{r}} \left[ \text{ $\bm{x}$ contains a size-$q$ certificate  for $\bm{R}$ w.r.t. $X$}\right]\leq  O(kqr/n)^{1/6}.
	\]
In particular, there is a subset $X'\subseteq X,|X'|\geq |X|/2$, and for every $x\in X'$,
\[
\Pr_{\bm{R}\sim\binom{[n]}{r}} \left[x \text{ contains a size-$q$ certificate  for } \bm{R} \text{ w.r.t. } X\right]\leq  O(kqr/n)^{1/6}.
\]
\end{lemma}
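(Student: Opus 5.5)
The second (``in particular'') part follows from the first by Markov's inequality. Let $p(x)$ be the probability over $\bm{R}$ that $x$ contains a size-$q$ certificate for $\bm{R}$. Then $\mathbb{E}_{\bm{x}}[p(\bm{x})]\le \delta:=O(kqr/n)^{1/6}$, so at most half of $X$ has $p(x)>2\delta$. Keeping the other half gives $X'$, and the factor $2$ is absorbed into the $O(\cdot)$. So the real work is the averaged bound, and I would prove it with an entropy argument in the style of Meir--Wigderson.

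Let $\bm{x}$ be uniform on $X$, so $H(\bm{x})\ge n-k$. I would pick a random ordering of $[n]$ and apply the chain rule. For a uniformly random set $\bm{S}$ of size $m$ and a random block $\bm{R}$ disjoint from it, this should give
\[
\mathbb{E}_{\bm{S},\bm{R}}\bigl[r-H(\bm{x}_{\bm{R}}\mid \bm{x}_{\bm{S}})\bigr]\le \frac{kr}{n-m-r}.
\]
The reason is that the total entropy deficiency $k$ is spread over $n$ coordinates, and a random block of $r$ coordinates conditioned on a random prefix picks up only a $\approx r/n$ share of it. Markov then shows that for most $(\bm{S},\bm{R},\bm{x}_{\bm{S}})$ the conditional distribution of $\bm{x}_{\bm{R}}$ is close to uniform on $\{0,1\}^R$. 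By Pinsker, its statistical distance from uniform is $O(\sqrt{kr/n})$ on average.

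The main obstacle is that the certificate $Q$ is chosen adversarially after seeing $x$ and $R$. A union bound over all $\binom{n}{q}2^q$ possible pairs $(Q,a)$ is hopeless. Conditioning on $x_Q=a$ could, in principle, be far more informative than conditioning on a random set. My first attempt would be a ``random embedding'' trick. Choose $\bm{S}$ of size $m$, a random set much larger than $q$, and relate a certificate $(Q,a)$ for $R$ to the event that the closure $Q\cup\bm{S}$ still misses a string of $\{0,1\}^R$. Refining the conditioning can only shrink supports, so if $(Q,a)$ is a certificate then so is $(Q\cup S, x_{Q\cup S})$.

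To pay for the adversarial $Q$, I would expose the $q$ coordinates of $Q$ one at a time. For each one I would argue, via the same chain-rule estimate applied to single coordinates, that revealing an adversarially chosen coordinate of a near-uniform conditional distribution costs only $O(\sqrt{kq/n})$ in expected deficiency. This is Meir--Wigderson's single-coordinate unpredictability, which I would reprove inside the entropy framework.

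Finally, a missing string in $\{0,1\}^R$ must be converted into an entropy or statistical-distance loss. Missing a single string only forces distance $2^{-r}$ from uniform, which is too weak on its own. I would instead split $R$ into a random point $i\in R$ and the rest, and apply the coordinate-wise unpredictability conditioned on $x_{R\setminus\{i\}}$ together with $x_Q$. A missing string makes some coordinate of $R$ determined given the others, and this happens with probability at least about $1/r$ over the choice of $i$. Combining the three lossy steps should give the claimed $O(kqr/n)^{1/6}$ after balancing $m$. Those steps are the Pinsker square root, the Markov step over $\bm{S}$, and the averaging over $i\in R$, where the $r$ cost enters through the $kqr/n$ product. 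I expect the adversarial-$Q$ step to be the delicate one and would check it first.
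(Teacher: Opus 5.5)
The paper does not prove this lemma; it imports it from G\"{o}\"{o}s et al. Your Markov reduction of the second part to the first is fine.

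The averaged bound, however, has a genuine gap at the step that converts a missing string into a determined coordinate. Suppose $(X_{Q=a})_R=\{0,1\}^R\setminus\{z\}$ and $\|x_R-z\|_0\ge 2$. Then for every $i\in R$ both $x_R$ and $x_R\oplus e_i$ lie in $(X_{Q=a})_R$. So $x$ contains no certificate for $i$ of the form $Q\cup(R\setminus\{i\})$, for any $i$. Coordinates are determined only at the strings adjacent to $z$. Their mass under the conditional law of $\bm{x}_R$ is about $r2^{-r}$, even when that law is essentially uniform. So there is no ``probability $\gtrsim 1/r$ over $i$''.

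Even granting that step, a multiplicative loss of $r$ on top of certificates of size $q+r-1$ cannot be absorbed. With the square-root single-coordinate bound you have in mind, $r\cdot O(k(q+r)/n)^{1/2}$ is $m^{1/2-o(1)}\gg 1$ in this paper's regime $k=m^{1-\Theta(\delta)}$, $q,r=m^{1+o(1)}$, $n=m^3$. More fundamentally, certificates are a support phenomenon that entropy and statistical distance do not see. For $X=\{0,1\}^n\setminus\{w\}$, $q=1$, $r=n-1$, the certificate probability is about $1/2$, while every entropy deficiency is $O(2^{-n})$. (Incidentally, this shows the lemma tacitly needs $k\ge 1$.)

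The second gap is the adversarial $Q$, which you assert rather than prove. The chain-rule estimate controls conditioning on coordinates chosen independently of $\bm{x}$. When $Q=Q(\bm{x},\bm{R})$ is chosen in hindsight, the choice itself can leak up to $\log\binom{n}{q}$ bits. An average over random coordinates then says nothing about the coordinate the adversary picks for this particular $x$. That is exactly the hard part, not a consequence of the chain rule.

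Your refinement observation is the right instinct, but refining by a small random $S$ never swallows an adversarial $Q$. Instead, refine by everything outside a random set, and then use a purely combinatorial shattering bound:
\begin{enumerate}
\item Take $J_0\subseteq[n]\setminus\bm{R}$ $\rho$-random with $\rho=1/(q+1)$, put $J=\bm{R}\cup J_0$, and let $F=\{y\in X: y_{[n]\setminus J}=\bm{x}_{[n]\setminus J}\}$.
\item On the certificate event, with probability $(1-\rho)^q\ge 1/e$ the witness $Q$ misses $J_0$. Then $F_{\bm R}\neq\{0,1\}^{\bm R}$, i.e.\ $F$ fails to shatter $\bm R$. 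This event has no quantifier over $Q$.
\item Given $J$, $\bm{R}$ is a uniform $r$-subset of $J$ independent of $\bm{x}$, and $\mathbb{E}[|J|-\log_2|F|]\le k$.
\item Down-shifting $F$ (viewed inside $\{0,1\}^J$) to a down-set creates no new shattered sets.
\item A down-set of density at least $2^{-k'}$ contains all but an $O(k'r/|J|)$ fraction of the $r$-subsets of $J$ when $r\le |J|/2$. To see this, couple a $\tfrac12$-random set as a union of independent $p$-random ones, then use normalized matching.
\end{enumerate}
This gives even $O(kqr/n)$ for $k\ge 1$.
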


The merit of block unpredictability is that it implies existence of local limit for the mirror set. The following lemma is implicit in \cite{goos2023top}, for completeness of this paper, we provide the proof here.

\begin{lemma}[\cite{goos2023top}]\label{lem2.7}Let $X \subseteq \{0,1\}^n, x\in X, R\subseteq [n]$ and $x$ does not contain any size-$q$ certificate  for $R$ w.r.t. $X$. If $y$ is obtained from $x$ by flipping some bits of $x$ in $R$, that is $y_{[n]\setminus R} = x_{[n]\setminus R}$, then $y$ is a $q$-local limit for $X$.
\end{lemma}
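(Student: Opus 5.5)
We argue directly from the definitions. Fix an arbitrary $Q\subseteq[n]$ with $|Q|=q$; we must produce some $x'\in X$ with $x'_Q=y_Q$. The idea is to split $Q$ into the part outside the block, where $y$ agrees with $x$, and the part inside the block, where $y$ may differ from $x$ arbitrarily. We then use the absence of a certificate to show that every pattern on $R$ is realized by some element of $X$ that still agrees with $x$ outside $R$ on the queried coordinates.

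Concretely, write $Q_1 := Q\setminus R$ and $Q_2 := Q\cap R$, and set $a := x_{Q_1}\in\{0,1\}^{Q_1}$. Since $y_{[n]\setminus R}=x_{[n]\setminus R}$, we have $y_{Q_1}=a$. Now $(Q_1,a)$ is a pair with $Q_1\subseteq [n]\setminus R$, and $x_{Q_1}=a$. We first want $|Q_1|=q$ to apply the hypothesis literally. If $|Q_1|<q$, pad $Q_1$ to a set $Q_1'\subseteq[n]\setminus R$ of size exactly $q$ and set $a':=x_{Q_1'}$; this is possible provided $n-r\ge q$, the only regime of interest. Since $x$ contains no size-$q$ certificate for $R$ w.r.t. $X$, the pair $(Q_1',a')$ is not a certificate, so $(X_{Q_1'=a'})_R=\{0,1\}^R$. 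In particular the string $y_R$ lies in this set: there is $x'\in X$ with $x'_{Q_1'}=x_{Q_1'}$ and $x'_R=y_R$.

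It remains to check that $x'_Q=y_Q$. On $Q_1\subseteq Q_1'$ we have $x'_{Q_1}=x_{Q_1}=y_{Q_1}$. On $Q_2\subseteq R$ we have $x'_{Q_2}=y_{Q_2}$. Hence $x'$ agrees with $y$ on all of $Q$, and since $Q$ was arbitrary, $y$ is a $q$-local limit for $X$.

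The only delicate point is the size bookkeeping in the definition: ``contains a size-$q$ certificate'' refers to $|Q|=q$ exactly, while $Q\setminus R$ may be smaller. This is handled by padding with further coordinates outside $R$, using that restricting to a superset only shrinks $X_{Q=a}$. Thus non-certificate status for the padded pair gives the full cube for the smaller conditioning set as well, and the padded version is what we actually use. Alternatively, one can note that the intended reading is ``size at most $q$'', and then no padding is needed.
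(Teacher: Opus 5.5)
Your proof is correct and follows the same route as the paper. You split $Q$ into $Q\setminus R$, where $y$ agrees with $x$, and $Q\cap R$, then use that $(Q\setminus R, x_{Q\setminus R})$ is not a certificate to find an element of $X$ that matches $y$ on $R$ and $x$ on $Q\setminus R$. Your padding step handles a detail the paper skips: the paper applies the hypothesis directly to $Q'=Q\setminus R$ with $|Q'|\le q$. Your caveat $n-|R|\ge q$ is genuinely needed under the exact-size reading, since otherwise the hypothesis is vacuous and the statement fails (e.g.\ $X=\{x\}$, $R=[n]$, $q=1$, $y\neq x$).
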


\begin{proof}
It suffices to show for every $Q\subseteq [n],|Q|=q$, we have $y_Q =z_Q$ for some $z\in X$. Let $Q' = Q\setminus R$, thus $|Q'|\leq q$, note that $y_{Q'} =x_{Q'}$. Since $x$ does not contain any size-$q$ certificate  for $R$ w.r.t. $X$, that is  $(X_{Q'=x_{Q'}})_R = \{0,1\}^R$, this means there is a $z\in X_{Q'=x_{Q'}}$ such that $z_R =y_R$ and $z_{Q'} =x_{Q'}$, the latter implies $z_{Q'} =x_{Q'}=y_{Q'}$, therefore, $z_{Q'\cup R} = y_{Q'\cup R}$ ,  that is $z_{Q\cup R} = y_{Q\cup R}$ as required.
\end{proof}

\section{Main result}\label{section3}
In this section, we prove our main result Theorem \ref{thm1.1} with top down approach. The proof strategy is similar to that in \cite{goos2023top}, the main difference is the construction of the mirror set. At first, recall that for any $x \in \{0,1\}^n$, the Majority function $\texttt{Maj}$ is defined to be $\texttt{Maj}(x)=1$ if and only if $\|x\|_0> \lfloor n/2 \rfloor$.
Let $m := n^{1/3}$, $\delta$ be some parameter such that  $\delta =o(1)$ and meanwhile $\delta=\omega(\frac{\log\log m}{\log m})$. Suppose for the sake of contradiction that $\Pi$ is a depth-4 circuit of size $|\Pi| \leq 2^{m^{1-\delta}}$ that computes the $n$-bit Majority function. W.l.o.g, assume that $\Pi$ is of type $\wedge \circ \vee \circ \wedge \circ \vee$, that is, it has an $\wedge$ gate at
the top,  rest layers alternate between $\vee$ and $\wedge$,  the bottom layer contains input literals. Starting at the top gate, we will take three steps down the circuit to reach the contradiction required.

\paragraph{Step one}~\\
Let the depth-4 circuit $\Pi$ be $\bigwedge_{i=1}^{s} \Sigma_i$ where every $\Sigma_i$ is a depth-3 circuit of type $\vee \circ \wedge \circ \vee$ and $s\leq  2^{m^{1-\delta}}$ is the fan-in of  top gate of $\Pi$. 
At first, define 
\[Y_0:=\{y \in \{0,1\}^n\mid\|y\|_0 =\lfloor n/2 \rfloor \}\subseteq \texttt{Maj}^{-1}(0)
\]
where $\|y\|_0$ is the Hamming weight of $y$. Thus we have
$ \texttt{Maj}^{-1}(1) \subseteq \Sigma_i^{-1}(1)$ and $Y_0 \subseteq \bigcup_{i=1}^{s} \Sigma_i^{-1}(0)$, now choose $\Sigma_i$  greedily so that $\Sigma_i$ maximizes  $Y_0 \cap \Sigma_i^{-1}(0)$. After this, we set $Y_1 = Y_0 \cap \Sigma_i^{-1}(0)$ and $\Sigma:=\Sigma_i$. To summarize,

\begin{itemize}
	\item $\Sigma$ accepts the set $\texttt{Maj}^{-1}(1)$, in particular,  it accepts the set $X_1 \subseteq \texttt{Maj}^{-1}(1)$ defined as follows,
	\[
	 X_1 :=\{x \in \{0,1\}^n\mid\|x\|_0 = \lfloor n/2 \rfloor +m\} .
	\]
	The size of $X_1$ is $2^n \cdot 2^{-\Theta(\log m)}$.
	\item  $\Sigma$  rejects the set $Y_1 = Y_0 \cap \Sigma^{-1}(0)$  such that $Y_1$ is a subset of $Y_0$  and $|Y_1|/|Y_0|\geq 1/s \geq 2^{-m^{1-\delta}}$.
\end{itemize}

\paragraph{Step two}~\\
At first, we construct a mirror set $M \subseteq \texttt{Maj}^{-1}(1)$ for  $Y_1$.
Intuitively, the mirror set $M$ is some set  ``between'' $X_1$ and $Y_1$ in the sense that for every $x' \in M$, its Hamming weight $\|x'\|_0$ satisfies $\lfloor n/2 \rfloor <\|x'\|_0 \leq \lfloor n/2 \rfloor +m$, and  $x'$ is obtained by choosing certain proper $x \in X_1$ then flipping a few ones in  $x$ to zeros, the number of the flipped ones is small so that the Hamming weight of $x'$ is still strictly larger than $n/2$. 
Formally, we have following lemma.

\begin{lemma}\label{lem3.1}There exists a subset $M \subseteq \texttt{Maj}^{-1}(1)$ such that for every $x' \in M$,  $Y_1-x'$ is $(m^{1+\Theta(\log\log m/\log m)}/n, o(1))$-satisfying. Furthermore, the size of  $M$ is at least $2^n\cdot 2^{-m^{1-\Theta(\delta)}}$.
\end{lemma}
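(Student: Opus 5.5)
The plan is to build $M$ by a spread-approximation argument applied to the bipartite structure between $X_1$ and $Y_0$. Only strings $y\in Y_1$ with $I_y\subseteq I_x$ are used. For such a $y$, the vector $y-x$ is the indicator of an $m$-subset $A=I_x\setminus I_y$ of $I_x$.

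\textbf{Step 1 (averaging).} For $x\in X_1$ let
\[
\mathcal F_x=\{I_x\setminus I_y \mid y\in Y_1,\ I_y\subseteq I_x\}\subseteq \binom{I_x}{m}.
\]
The containment graph between $X_1$ and $Y_0$ is biregular. Double counting therefore gives
\[
\mathbb E_{x\sim X_1}\Big[|\mathcal F_x|\big/\tbinom{\lfloor n/2\rfloor+m}{m}\Big]=|Y_1|/|Y_0|\geq 2^{-m^{1-\delta}}.
\]
Hence at least a $2^{-m^{1-\delta}}/2$ fraction of $x\in X_1$ have relative density $\alpha_x:=|\mathcal F_x|/\binom{\lfloor n/2\rfloor+m}{m}\geq \alpha:=2^{-m^{1-\delta}-1}$. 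Call these $x$ good.

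\textbf{Step 2 (extracting a spread link).} Set $p=m^{1+\eta}/n$ with $\eta=\Theta(\log\log m/\log m)$ chosen so that $m^{\eta}\geq \log^{2} m$. Set $\kappa=(C/p)\log(m/\varepsilon)$ for a slowly vanishing $\varepsilon$, as in Lemma~\ref{lem2.3}. Fix a good $x$ and choose $I\subseteq I_x$ maximal with
\[
\Pr_{A\sim\mathcal F_x}[I\subseteq A]\geq \kappa^{-|I|}
\]
($I=\emptyset$ qualifies). Maximality implies that the link $\mathcal F_{x,I}=\{A\setminus I\mid A\in\mathcal F_x,\ I\subseteq A\}$ is $\kappa$-spread. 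Indeed, for $J$ disjoint from $I$ we have $\Pr[J\subseteq A'\mid\cdot]\le\kappa^{-|I\cup J|}/\kappa^{-|I|}=\kappa^{-|J|}$, and sets $J$ meeting $I$ have probability $0$.

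To bound $|I|$, compare with the uniform distribution on $\binom{I_x}{m}$:
\[
\kappa^{-|I|}\leq \Pr_{\mathcal F_x}[I\subseteq A]\leq \alpha^{-1}\,(2m/n)^{|I|}.
\]
Since $\kappa m/n\approx m^{-\eta}\log m\le 1/\log m$, this gives $|I|\leq \log(1/\alpha)/\Theta(\eta\log m)=O(m^{1-\delta}\log m/\log\log m)$. In particular $|I|<m$.

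\textbf{Step 3 (the mirror point).} Define $x'=x$ with the ones in $I$ flipped to zero. Then $\|x'\|_0=\lfloor n/2\rfloor+m-|I|>\lfloor n/2\rfloor$, so $x'\in\texttt{Maj}^{-1}(1)$.

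For each $A\in\mathcal F_x$ with $I\subseteq A$, the corresponding $y$ satisfies $I_{y-x'}=A\setminus I$. So $Y_1-x'$ contains the $\kappa$-spread family $\mathcal F_{x,I}\subseteq\binom{[n]}{m-|I|}$. Lemma~\ref{lem2.3}, applied with $r=m-|I|\le m$, shows that $\mathcal F_{x,I}$ is $(p,\varepsilon)$-satisfying. Satisfying is monotone under adding sets, so $Y_1-x'$ is $(m^{1+\Theta(\log\log m/\log m)}/n,o(1))$-satisfying. Let $M$ be the set of all such $x'$ over good $x$.

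\textbf{Step 4 (counting).} The number of good $x$ is at least $|X_1|\,2^{-m^{1-\delta}-1}=2^n 2^{-m^{1-\delta}-\Theta(\log m)}$. Each $x'$ arises from at most $\sum_{t\le |I|_{\max}}\binom{n}{t}\le 2^{O(|I|_{\max}\log n)}$ good $x$, since $x$ is recovered from $x'$ and $I$. Here $|I|_{\max}$ denotes the Step 2 bound. Hence
\[
|M|\geq 2^n\cdot 2^{-m^{1-\delta}\,\mathrm{polylog}(m)}=2^n\cdot 2^{-m^{1-\Theta(\delta)}},
\]
where the last equality uses $\delta=\omega(\log\log m/\log m)$.

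The main obstacle is Step 2. Everything rests on the link being spread with parameter $\kappa$ only slightly below $n/m$. This requires the density-increment bound on $|I|$ to be genuinely $m^{1-\delta}$ up to polylog factors, so that both $|I|<m$ (Majority is preserved) and the $\binom{n}{|I|}$ loss in Step 4 stay affordable. If the constants do not close, I would first adjust $\eta$ within the allowed $\Theta(\log\log m/\log m)$ range.
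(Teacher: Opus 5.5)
Your proposal is correct and follows essentially the same route as the paper. It uses averaging over the biregular containment graph (Claim~\ref{claim3.4}), then a maximal kernel $I$ whose size is bounded by comparing with the uniform measure on $\binom{I_x}{m}$ (Claim~\ref{claim3.6}), then spreadness of the link combined with the robust sunflower lemma (Claim~\ref{claim3.7}), and finally a count of the preimages $x=x'+\mathbf{1}_I$.

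The differences are only in parameters, plus one slip. You set the spread threshold directly to the $\kappa$ required by Lemma~\ref{lem2.3} for $p=m^{1+\eta}/n$, so each kernel element gains a density factor of order $\log m$; the paper instead uses $(\ell/m)^{1-1/\log m}$, which gains a constant factor $\approx 4$. Separately, your inequality actually gives $|I|=O(m^{1-\delta}/\log\log m)$, not the looser $O(m^{1-\delta}\log m/\log\log m)$ you state, but the looser bound still suffices.
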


We defer the proof of this lemma to Section \ref{section3.1}. Now we are  ready to  take the second step down the circuit. Recall  $\Sigma$ is a depth-3 circuit  of type   $\vee \circ \wedge \circ \vee$, furthermore it accepts the set $\texttt{Maj}^{-1}(1)$ and in particular,  the set $M$, rejects  the set $Y_1$. Suppose the depth-3 circuit $\Sigma$ is of form $\bigvee_{i=1}^{s} \Gamma_i$ where every $\Gamma_i$ is a depth-2 circuit of type $\wedge \circ \vee$. Thus, we have
$ M \subseteq \bigcup_{i=1}^{s}\Gamma_i^{-1}(1)$, and $Y_1 \subseteq \Gamma_i^{-1}(0)$ for every $i$. 
Now, we choose $\Gamma_i$ so that $\Gamma_i$ maximizes  $M \cap \Gamma_i^{-1}(1)$. After this , we set $X_2 = M \cap \Gamma_i^{-1}(1)$ and $\Gamma:=\Gamma_i$. To summarize, $\Gamma$ accepts the set $X_2 \subseteq M$, rejects the set $Y_1$. Furthermore, the size of $X_2$ is at least $|M|/2^{m^{1-\delta}} =2^n\cdot 2^{-m^{1-\Theta(\delta)}}$.

\paragraph{Step three}~\\
Before going further, we want to setup some parameters properly. At first, set $k$ to ensure $|X_2|= 2^n \cdot 2^{-k}$, thus $k = m^{1-\Theta(\delta)}$. To invoke Lemma \ref{lem3.1}, we set $r$ to
ensure $r/n \geq m^{1+\Theta(\frac{\log\log m}{\log m})}/n$, thus $r = m^{1+\Theta(\frac{\log\log m}{\log m})}$. At the end of Step three, we will construct certain set $Y_3$ with size (at least) $2^n \cdot 2^{-m^{1+\Theta(\frac{\log\log m}{\log m})}}$, and we set $q$ to ensure $Y_3\geq 2^n\cdot 2^{-q}$, thus $q=m^{1+\Theta(\frac{\log\log m}{\log m})}$. Now apply Lemma \ref{lem2.6} to the set $X_2$ with these parameters
$k, r, q $, since $krq \leq  o(n)$, we have following fact.

\begin{claim}\label{claim3.2}There is a subset $X_3\subseteq X_2,|X_3|\geq |X_2|/2$, and for every $x\in X_3$,
	\[
	\Pr_{\bm{R}\sim\binom{[n]}{r}} \left[x \text{ contains a size-$q$ certificate  for } \bm{R} \text{ w.r.t. } X_2\right]\leq  o(1).
	\]
\end{claim}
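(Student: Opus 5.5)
Claim \ref{claim3.2} follows directly from the ``in particular'' part of Lemma \ref{lem2.6}. We apply that lemma with $X := X_2$ and the parameters $k, r, q$ fixed above. The only thing to verify is that the error term $O(kqr/n)^{1/6}$ is $o(1)$, i.e.\ that $kqr = o(n)$.

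\textbf{Hypothesis of Lemma \ref{lem2.6}.} By the choice of $k$ we have $|X_2| = 2^n\cdot 2^{-k}$ with $k = m^{1-\Theta(\delta)}$. This is guaranteed by the lower bound $|X_2|\geq 2^n\cdot 2^{-m^{1-\Theta(\delta)}}$ obtained at the end of Step two. If that bound is only an inequality, we simply take $k$ so that $|X_2|\geq 2^{n-k}$, which is all the lemma requires.

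\textbf{Size of the error term.} Recall $m=n^{1/3}$, so $n=m^3$. Then
\[
kqr = m^{1-\Theta(\delta)}\cdot m^{1+\Theta(\frac{\log\log m}{\log m})}\cdot m^{1+\Theta(\frac{\log\log m}{\log m})} = m^{3-\Theta(\delta)+\Theta(\frac{\log\log m}{\log m})}.
\]
The parameter $\delta$ was chosen so that $\delta=\omega(\log\log m/\log m)$. Hence the negative $\Theta(\delta)$ in the exponent dominates the positive $\Theta(\log\log m/\log m)$ terms, and $kqr = m^{3-\Theta(\delta)} = n\cdot m^{-\Theta(\delta)}$.

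Since $\delta\log m=\omega(\log\log m)\to\infty$, we get $m^{-\Theta(\delta)} = 2^{-\Theta(\delta \log m)} = o(1)$. Therefore $kqr/n = o(1)$, and so $O(kqr/n)^{1/6}=o(1)$.

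\textbf{Conclusion.} The ``in particular'' part of Lemma \ref{lem2.6} then yields a set $X_3\subseteq X_2$ with $|X_3|\geq |X_2|/2$ such that, for every $x\in X_3$, the probability over $\bm R\sim\binom{[n]}{r}$ that $x$ contains a size-$q$ certificate for $\bm R$ w.r.t.\ $X_2$ is at most $o(1)$. This is exactly the claim.

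\textbf{Main obstacle.} The only delicate point is the exponent bookkeeping. The constants hidden in the $\Theta(\cdot)$ expressions for $r$ and $q$ must be genuinely absorbed by the $\delta$ saving in $k$. This is precisely why $\delta$ was taken to be $\omega(\log\log m/\log m)$ rather than just $o(1)$.
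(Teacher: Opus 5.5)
Your proposal is correct and matches the paper's argument: the paper also obtains Claim~\ref{claim3.2} by applying the ``in particular'' part of Lemma~\ref{lem2.6} to $X_2$ with the chosen $k,r,q$, and it only asserts $krq\leq o(n)$. Your check that $kqr = n\cdot m^{-\Theta(\delta)}$, with $m^{-\Theta(\delta)}=o(1)$ because $\delta=\omega(\log\log m/\log m)$, supplies the exponent bookkeeping the paper leaves implicit.
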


Now we show how to sample many local limit $y$ in $Y_1$ according to the element $x$ in the set $X_3$.  Given any $x\in X_3$, use following process to sample a random $\bm{y}(x) \in Y_1\cup \{\bot\}$ where $\bot$ means failure.

\begin{enumerate}[label=(\arabic*)]
	\item Sample  a random $\bm{R}\sim \binom{[n]}{r}$.
	\item \label{step2}If $x$ contains a size-$q$ certificate for $\bm{R}$ w.r.t $X_2$, output $\bm{y}(x):=\bot$ .
	\item \label{step3}If there is some $y \in Y_1$ such that $y$ is obtained by flipping some bits in $\bm{R}$ of $x$, in other words, $y_{[n]\setminus \bm{R}}=x_{[n]\setminus \bm{R}}$, output $\bm{y}(x):=y$. Otherwise output $\bm{y}(x):=\bot$.
\end{enumerate}

\begin{claim}\label{claim3.3}Let $Y_2:= \bigcup_{x \in X_3}\text{supp}(\bm{y}(x)) \setminus\{\bot\} \subseteq Y_1$, then we have every $y\in Y_2$ is a $q$-local limit  for $X_2$ and the size of $Y_2$ is at least $2^n \cdot 2^{-m^{1+\Theta(\frac{\log\log m}{\log m})}}$.
\end{claim}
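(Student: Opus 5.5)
The claim has two parts: every $y \in Y_2$ is a $q$-local limit for $X_2$, and $Y_2$ is large. The first part is immediate from Lemma~\ref{lem2.7}. Any $y \in Y_2$ is output by the sampling process for some $x \in X_3$ and some block $R$. Because the process did not stop at step~\ref{step2}, $x$ contains no size-$q$ certificate for $R$ w.r.t.\ $X_2$. Step~\ref{step3} gives $y_{[n]\setminus R} = x_{[n]\setminus R}$. Lemma~\ref{lem2.7}, applied with $X = X_2$ (note $x\in X_3\subseteq X_2$), then shows that $y$ is a $q$-local limit for $X_2$.

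For the size bound I would first show that $\Pr[\bm{y}(x)\neq\bot] \geq 1-o(1)$ for each $x\in X_3$. Since $X_3\subseteq X_2\subseteq M$, Lemma~\ref{lem3.1} says $Y_1 - x$ is $(r/n, o(1))$-satisfying; here I use that $r$ was chosen so that $r/n$ is at least the threshold in Lemma~\ref{lem3.1}, and that satisfying is monotone in $p$ because a larger random set can be coupled to contain a smaller one. Hence with probability $1-o(1)$ over $\bm R$ there is a $y\in Y_1$ with $I_{y-x}\subseteq \bm R$, meaning $y$ differs from $x$ only inside $\bm R$, so step~\ref{step3} succeeds. By Claim~\ref{claim3.2}, step~\ref{step2} outputs $\bot$ with probability only $o(1)$. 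A union bound gives success probability $1-o(1)$.

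Next comes a counting (double-counting) step. Each successful output $y$ agrees with $x$ outside an $r$-set, so $x$ lies in the Hamming ball of radius $r$ around $y$. Thus every $y\in Y_2$ is produced by at most $\sum_{i\le r}\binom{n}{i}\le 2^{O(r\log n)}$ different $x$'s. Every $x\in X_3$ produces at least one $y\in Y_2$ (its success probability is positive), so
\[
|Y_2| \ \ge\ |X_3|\cdot 2^{-O(r\log n)} \ \ge\ 2^{n}\cdot 2^{-k-1-O(r\log n)}.
\]
Since $k = m^{1-\Theta(\delta)}$, $r = m^{1+\Theta(\log\log m/\log m)}$, and $\log n = 3\log m = m^{\Theta(\log\log m/\log m)}$, the exponent is $m^{1+\Theta(\log\log m/\log m)}$, as required.

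The main obstacle is just the bookkeeping: the $\log n$ factor has to be absorbed into the $\Theta(\log\log m/\log m)$ term of the exponent. One must also justify applying Lemma~\ref{lem3.1} at exactly $p=r/n$, using the monotonicity argument above. Nothing deeper seems needed.
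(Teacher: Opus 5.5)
Your proposal is correct and follows essentially the same route as the paper. It uses Lemma~\ref{lem2.7} for the local-limit property, then Claim~\ref{claim3.2} and Lemma~\ref{lem3.1} for a $1-o(1)$ success probability, then the multiplicity bound $\sum_{i\le r}\binom{n}{i}=2^{m^{1+\Theta(\log\log m/\log m)}}$ to get $|Y_2|\ge |X_3|/2^{m^{1+\Theta(\log\log m/\log m)}}$; your extra remarks (monotonicity of satisfying in $p$, and that only positivity of the success probability is needed) are valid details the paper leaves implicit.
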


\begin{proof}At first  by Lemma \ref{lem2.7}, we know every $y\in Y_2$ is a $q$-local limit  for $X_2$. Now we estimate the number of distinct $y$'s generated by this process. Given any $x\in X_3$, the probability to successfully obtain a corresponding $y$ is $1-o(1)$, since by Claim \ref{claim3.2} step \ref{step2} fails with $o(1)$ probability and by Lemma \ref{lem3.1} step \ref{step3} fails with $o(1)$ probability. And in this process, the multiplicity of $y$ is at most $\sum_{i=0}^{r} \binom{n}{i} =2^{m^{1+\Theta(\frac{\log\log m}{\log m})}}$. Thus the size of $Y_2$ is at least $|X_3|\cdot (1-o(1))/ 2^{m^{1+\Theta(\frac{\log\log m}{\log m})}} = 2^n \cdot2^{- m^{1-\Theta(\delta)}}\cdot 2^{-m^{1+\Theta(\frac{\log\log m}{\log m})}}=2^n \cdot 2^{-m^{1+\Theta(\frac{\log\log m}{\log m})}}$.
\end{proof}

It is time to take our third thus final step down the circuit. Recall that the depth-2 circuit   $\Gamma$ accepts the set $X_2$ and  rejects the set $Y_2\subseteq Y_1$, assume $\Gamma$ is of form  $\bigwedge_{i=1}^{s} \Lambda_i$ where every $\Lambda_i$ is an $\vee$-clause. Thus, we have
$X_2 \subseteq \Lambda_i^{-1}(1)$ for every $i$,
and $Y_{2} \subseteq \bigcup_{i=1}^{s}\Lambda_i^{-1}(0)$, now choose $\Lambda_i$ so that $\Lambda_i$ maximizes  $Y_{2} \cap \Gamma_i^{-1}(0)$. After this, we set \label{Y3}$Y_3 = Y_{2} \cap \Lambda_i^{-1}(1)$ and $\Lambda:=\Lambda_i$. In summary, the $\vee$-clause  $\Lambda$ accepts the set $X_2$, rejects the set $Y_3 \subseteq Y_2$, the size of $Y_3$ is at least $|Y_2|/2^{-m^{1-\delta}} = 2^n \cdot 2^{-m^{1+\Theta(\frac{\log\log m}{\log m})}}$.

\paragraph{Reaching contradiction}~\\
Note that the $\vee$-clause $\Lambda$ accepts $X_2$ and  rejects $Y_3$, since $|Y_3| \geq 2^n \cdot 2^{-q}$, thus the number of input literals in $\Lambda$ is at most $q$, since every $y\in Y_3\subseteq Y_2$ is $q$-local limit for $X_2$, this means there exists $x\in X_2$ such that $\Lambda(x)=\Lambda(y)$ which is a contradiction as required.

\subsection{Proof of Lemma 3.1}\label{section3.1}
Set $\ell = \lfloor n/2 \rfloor + m$.  Recall that we set $X_1$ to be $\{x \in \{0,1\}^n\mid\|x\|_0 = \ell\}$ and the size of $X_1$ is $2^n \cdot 2^{-\Theta(\log m)}$. Denote the sphere with  center $x\in  \{0, 1\}^n$ and radius $r\in [n]$ by $\mathbb{S}_r(x) := \{y \in  \{0, 1\}^n : \|x-y\|_0 = r\}$ where
$\|x-y\|_0$ is the Hamming distance between $x$ and $y$.
Now we show there are many $x$'s in $X_1$ such that $\mathbb{S}_m(x)$ contains many $y$'s from $Y_1$.

\begin{claim} \label{claim3.4}
There is a subset $X_{1.1} \subseteq  X_1, \frac{|X_{1.1}|}{|X_1|} \geq  \frac{|Y_1|}{2|Y_0|}=2^{-m^{1-\delta}-1}$, such that for every $x \in  X_{1.1}$ we have
	\[
	\frac{|\mathbb{S}_m(x)\cap Y_1|}{\binom{|I_x|}{m}} \geq  \frac{1}{2}	\cdot \frac{|Y_1|}{|Y_0|} \geq 2^{-m^{1-\delta}-1}.
	\]
\end{claim}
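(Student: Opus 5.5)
This is a double-counting (averaging) argument over the bipartite graph between $X_1$ and $Y_0$. Put an edge between $x\in X_1$ and $y\in Y_0$ whenever $\|x-y\|_0=m$. Since $\|x\|_0=\ell=\lfloor n/2\rfloor+m$ and $\|y\|_0=\lfloor n/2\rfloor$, distance $m$ forces $I_y\subseteq I_x$ with $|I_x\setminus I_y|=m$. So $y$ is obtained from $x$ by turning exactly $m$ ones into zeros. Hence every $x\in X_1$ has exactly $\binom{|I_x|}{m}=\binom{\ell}{m}$ neighbours in $Y_0$, namely $\mathbb{S}_m(x)\cap Y_0$. Symmetrically, every $y\in Y_0$ has exactly $\binom{n-\lfloor n/2\rfloor}{m}$ neighbours in $X_1$. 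The graph is therefore biregular, with degrees $D_X=\binom{\ell}{m}$ and $D_Y=\binom{n-\lfloor n/2\rfloor}{m}$, and $|X_1|D_X=|Y_0|D_Y$.

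Next I count the edges landing in $Y_1$ in two ways:
\[
\sum_{x\in X_1}|\mathbb{S}_m(x)\cap Y_1| = |Y_1|\,D_Y = \frac{|Y_1|}{|Y_0|}\,|X_1|\,D_X .
\]
Dividing by $|X_1|D_X$, the average over $x\in X_1$ of $\alpha(x):=|\mathbb{S}_m(x)\cap Y_1|/\binom{\ell}{m}$ equals $\rho:=|Y_1|/|Y_0|$, and $\alpha(x)\in[0,1]$ for every $x$.

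Now define $X_{1.1}:=\{x\in X_1:\alpha(x)\ge \rho/2\}$. This set gives the first inequality for each of its elements directly. For its size, I use the standard Markov-type bound for $[0,1]$-valued variables:
\[
\rho=\mathbb{E}[\alpha]\le \Pr[x\in X_{1.1}]\cdot 1+\rho/2,
\]
so $|X_{1.1}|/|X_1|\ge \rho/2$. Finally, Step one gives $\rho\ge 1/s\ge 2^{-m^{1-\delta}}$, so both quantities are at least $2^{-m^{1-\delta}-1}$. (The statement writes this with equality, but only the lower bound is meaningful.)

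The only step needing care is the biregularity check. It requires confirming that the Hamming distance between weight-$\ell$ and weight-$\lfloor n/2\rfloor$ strings is exactly $m$ precisely when $I_y\subseteq I_x$. This holds because the distance equals $\ell+\lfloor n/2\rfloor-2|I_x\cap I_y|$, which is $m$ if and only if $|I_x\cap I_y|=\lfloor n/2\rfloor$. Once that is in place, the rest is averaging.
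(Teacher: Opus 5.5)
Your proposal is correct and follows the paper's proof. The paper also shows $I_y\subseteq I_x$ and notes that sampling a uniform $x\in X_1$ and then a uniform $y\in\mathbb{S}_m(x)\cap Y_0$ yields a uniform $y\in Y_0$ (this is exactly your biregularity), so $\E[\alpha_x]=|Y_1|/|Y_0|$, and it finishes with the same Markov-type averaging. Your explicit degree count and your remark that the displayed equality should be a lower bound usefully clarify points the paper leaves implicit.
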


\begin{proof} 
At first, note that for every $y\in \mathbb{S}_m(x)\cap Y_1$, since $\|x\|_0-\|y\|_0=\ell- \lfloor n/2\rfloor= m =\|x-y\|_0$, we must have $I_y\subset I_x$. Formally, if $I_y \not\subset I_x$, $|I_y \cap I_x|< \lfloor n/2\rfloor$, thus
	$ \|x-y\|_0 = |I_{x-y}| \geq |I_x \setminus I_y| \geq |I_x \setminus (I_y \cap I_x)| > m$ which is a contradiction. Now sample $\bm{x}$ uniformly random in $X_1$, then sample $\bm{y}$  uniformly random in $\mathbb{S}_m(\bm{x}) \cap Y_0$, note that $\bm{y}$ is uniformly distributed in $Y_0$. Now set $\alpha_x :={|\mathbb{S}_m(x)\cap Y_1|}/{\binom{|I_x|}{m}},\beta :={|Y_1|}/{|Y_0|}$, thus $\E_{\bm{x}\sim X_1}[\alpha_{\bm{x}}] =\Pr_{\bm{y}\sim Y_0}[\bm{y}\in Y_1]=\beta$. Furthermore, $\E_{\bm{x}\sim X_1}[\alpha_{\bm{x}}]\leq 
	\Pr_{\bm{x}\sim X_1}[\alpha_{\bm{x}} <\frac{\beta}{2} ]\cdot \frac{\beta}{2}  + \Pr_{\bm{x}\sim X_1}[\alpha_{\bm{x}} \geq\frac{\beta}{2}]\cdot 1$, thus $\Pr_{\bm{x}\sim X_1}[\alpha_{\bm{x}} \geq\frac{\beta}{2} ] \geq \beta/2$.
\end{proof}

\begin{lemma}\label{lem3.5} 
Given an element $x \in X_{1.1}$ such that ${|\mathbb{S}_m(x)\cap Y_1|}/\binom{|I_x|}{m} \geq  2^{-m^{1-\delta}-1}$.  Then there is a corresponding $x'\in \texttt{Maj}^{-1}(1)$ such that $Y_1-x'$ is $(m^{1+\Theta(\frac{\log\log m}{\log m})}/n, o(1))$-satisfying. Moreover, $\|x-x'\|_0 \leq m^{1-\delta}$.
\end{lemma}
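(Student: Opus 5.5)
Fix $x\in X_{1.1}$ as in the statement and set $\mathcal{F}:=\{I_x\setminus I_y \mid y\in \mathbb{S}_m(x)\cap Y_1\}\subseteq \binom{I_x}{m}$. By the first observation in the proof of Claim \ref{claim3.4}, each such $y$ satisfies $I_y\subset I_x$, so $y-x$ is exactly the indicator of the $m$-set $I_x\setminus I_y$. The hypothesis then says $|\mathcal{F}|\geq 2^{-m^{1-\delta}-1}\binom{|I_x|}{m}$. In other words, $\mathcal{F}$ is a dense subfamily of the $m$-subsets of the ground set $I_x$, which has size $\ell\approx n/2$. The plan follows the standard ``dense implies spread after conditioning'' argument. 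We want to find a set $S\subseteq I_x$ with $|S|\leq m^{1-\delta}$ such that the link $\mathcal{F}_S:=\{F\setminus S\mid F\in\mathcal{F}, S\subseteq F\}$ is $\kappa$-spread. Then we put $x':=x-\mathbf{1}_S$, i.e.\ we flip the ones of $x$ on $S$ to zeros. This gives $\|x-x'\|_0=|S|\leq m^{1-\delta}<m$, so $\|x'\|_0>\lfloor n/2\rfloor$ and $x'\in\texttt{Maj}^{-1}(1)$. Moreover, for each $F\in\mathcal{F}$ with $S\subseteq F$, the corresponding $y$ satisfies $y-x'=\mathbf{1}_{F\setminus S}$, hence $\mathcal{F}_S\subseteq Y_1-x'$, viewing vectors as supports. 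Since $(p,\varepsilon)$-satisfying is monotone under taking supersets of the family, it suffices to show that $\mathcal{F}_S$ is $(p,o(1))$-satisfying over $[n]$.

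To find $S$, pick $S$ maximal (greedily) subject to the density increment
\[\frac{|\mathcal{F}_S|}{\binom{|I_x|-|S|}{m-|S|}}\geq \frac{|\mathcal{F}|}{\binom{|I_x|}{m}}\cdot \lambda^{|S|}\]
for a parameter $\lambda>1$ (say $\lambda=2$). Since the left side is at most $1$ and the initial density is at least $2^{-m^{1-\delta}-1}$, we get $|S|\leq (m^{1-\delta}+1)/\log\lambda$, which is $O(m^{1-\delta})$. If we want the exact bound $m^{1-\delta}$, we can take $\lambda$ a large constant or absorb the constant into $\delta$. Maximality then says that for every nonempty $I\subseteq I_x\setminus S$, adding $I$ fails the increment. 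Writing $k=m-|S|$ and $N=|I_x|-|S|$, this gives
\[\Pr_{F\sim\mathcal{F}_S}[I\subseteq F]<\lambda^{|I|}\binom{N-|I|}{k-|I|}\Big/\binom{N}{k}\leq (\lambda k/N)^{|I|}.\]
Hence $\mathcal{F}_S$ is $\kappa$-spread with $\kappa=N/(\lambda k)=\Omega(n/m)$, as $N\geq n/2-m$. Sets $I$ not contained in $I_x\setminus S$ have probability $0$ and cause no trouble. The family is $k$-uniform with $k\leq m$; it can be padded, or one can note that Lemma \ref{lem2.3} is used with $r\leq m$.

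Next, apply Lemma \ref{lem2.3} with uniformity $m$ and $\varepsilon=1/\log m$, say. We need $\kappa\geq (C/p)\log(m/\varepsilon)$, i.e.\ $p\geq C\lambda m\log(m\log m)/n=m^{1+\Theta(\log\log m/\log m)}/n$, because $\log m = m^{\log\log m/\log m}$. This is exactly the required parameter, so $\mathcal{F}_S$, and hence $Y_1-x'$, is $(p,o(1))$-satisfying. The main obstacle is the spreadness step: making the density-increment bookkeeping correct, namely the binomial ratio $\binom{N-|I|}{k-|I|}/\binom{N}{k}\leq (k/N)^{|I|}$ and the requirement that $|S|<m$ so the link is nonempty and the weight stays above $n/2$. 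Both follow from the initial density $2^{-m^{1-\delta}}$ being much larger than $2^{-m}$; this is precisely where the choice $|\Pi|\leq 2^{m^{1-\delta}}$ is used.
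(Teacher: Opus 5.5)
Your proposal is correct and is essentially the paper's argument. You take a maximal kernel $S$ on which the family $(\mathbb{S}_m(x)\cap Y_1)-x$ is over-concentrated, bound $|S|$ by the initial density $2^{-m^{1-\delta}-1}$, deduce spreadness of the link from maximality, and finish with Lemma~\ref{lem2.3} applied to the link inside $Y_1-x'$. The only difference is cosmetic: you measure the increment against the exact binomial ratio with a constant $\lambda$ (take $\lambda=4$ to get $|S|\le m^{1-\delta}$ exactly), whereas the paper uses the threshold $\kappa^{-(1-\epsilon)|I|}$ with $\epsilon=1/\log m$, which amounts to $\lambda=\kappa^{\epsilon}\approx 4$.
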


Given an element $x \in X_{1.1}$, define $A := (\mathbb{S}_m(x)\cap Y_1)-x$, 
we know that $|A|/\binom{|I_x|}{m}\geq 2^{-m^{1-\delta}-1}$.  Since $|I_x| =\ell$, for simplicity and without loss of generality, assume $I_x =[\ell]$. Let
$\epsilon = \epsilon(m) := 1/ \log m = o(1)$ and $\kappa := \ell/m$. Now treat $A$ as a set family $A\subseteq \binom{I_x}{m}$, we want to force $A$ to be $\kappa^{1-\epsilon}$-spread by excluding a kernel. We isolate the kernel $I$ in a greedy way by set $I\subseteq [n]$ to be the largest set  where the $\kappa^{1-\epsilon}$- spreadness condition fails, that is 
\[
\Pr_{\bm{a}\sim A}[\bm{a}_I =\bm{1}_I]\geq \kappa^{-(1-\epsilon)|I|}.
\]
where $\bm{1}$ is the vector of all ones. Firstly, we want to make sure the kernel $I$ is not too large as claimed below.

\begin{claim}\label{claim3.6}
    $I\subseteq  I_x$ and $|I| \leq  m^{1-\delta}$. Furthermore, let  $a'$ be the indicator vector for $I$ in $[n]$, and $x'=x-a'$, the Hamming weight of $x'$ is strictly larger than $n/2$, thus $x' \in \texttt{Maj}^{-1}(1)$.
\end{claim}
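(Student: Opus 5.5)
The argument has three parts, all driven by the defining inequality of the kernel $I$ together with the density bound $|A|/\binom{\ell}{m}\geq 2^{-m^{1-\delta}-1}$.

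\emph{Containment $I\subseteq I_x$.} Every $a\in A$ has the form $y-x$ with $y\in\mathbb{S}_m(x)\cap Y_1$. By the first part of the proof of Claim~\ref{claim3.4}, $I_y\subset I_x$, so $a$ is the indicator of $I_x\setminus I_y$, an $m$-subset of $I_x=[\ell]$. If $I$ contained a coordinate outside $I_x$, then $\Pr_{\bm a\sim A}[\bm a_I=\bm 1_I]=0$. This is strictly less than $\kappa^{-(1-\epsilon)|I|}>0$, contradicting the choice of $I$. Hence $I\subseteq I_x$. We may also assume $|I|\le m$, since for $|I|>m$ the probability is again $0$.

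\emph{Bounding $|I|$.} Set $t=|I|$. The number of $m$-subsets of $[\ell]$ containing $I$ is $\binom{\ell-t}{m-t}$, so
\[
\kappa^{-(1-\epsilon)t}\le \Pr_{\bm a\sim A}[\bm a_I=\bm 1_I]\le \frac{\binom{\ell-t}{m-t}}{|A|}\le 2^{m^{1-\delta}+1}\,\frac{\binom{\ell-t}{m-t}}{\binom{\ell}{m}}\le 2^{m^{1-\delta}+1}\Big(\frac{m}{\ell}\Big)^{t}=2^{m^{1-\delta}+1}\kappa^{-t}.
\]
Here we use $\binom{\ell-t}{m-t}/\binom{\ell}{m}=\prod_{i<t}\frac{m-i}{\ell-i}\le (m/\ell)^t$. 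Rearranging gives $\kappa^{\epsilon t}\le 2^{m^{1-\delta}+1}$, that is,
\[
t\le \frac{m^{1-\delta}+1}{\epsilon\log\kappa}.
\]
Now $\kappa=\ell/m\approx n/(2m)=m^2/2$, so $\log\kappa=\Theta(\log m)$, while $\epsilon=1/\log m$. Hence $\epsilon\log\kappa=\Theta(1)$, more precisely roughly $2$ once constants are tracked. This yields $t\le O(m^{1-\delta})$. To obtain exactly $t\le m^{1-\delta}$ the constants must be checked: with $\log\kappa\approx 2\log m-1$ we get $\epsilon\log\kappa\approx 2-o(1)>1$, so $t\le (m^{1-\delta}+1)/(2-o(1))\le m^{1-\delta}$ for large $m$. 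This constant bookkeeping is the only delicate step. If it turned out too tight, I would weaken the claim to $|I|\le O(m^{1-\delta})=m^{1-\Theta(\delta)}$, which is all the later argument uses.

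\emph{Weight of $x'$.} Since $I\subseteq I_x$, $x'=x-a'$ simply removes the ones in $I$, so
\[
\|x'\|_0=\ell-|I|\ge \lfloor n/2\rfloor+m-m^{1-\delta}>\lfloor n/2\rfloor,
\]
because $m^{1-\delta}<m$ when $\delta=\omega(\log\log m/\log m)$. Therefore $x'\in\texttt{Maj}^{-1}(1)$, and also $\|x-x'\|_0=|I|\le m^{1-\delta}$.
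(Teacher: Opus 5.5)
Your proposal is correct and follows essentially the same route as the paper. The same support argument gives $I\subseteq I_x$ and $|I|\le m$. Your direct count $|\{a\in A: a_I=\bm{1}_I\}|\le\binom{\ell-t}{m-t}$ is the paper's law-of-total-probability step written differently, and it leads to the same bound $\kappa^{\epsilon|I|}\le 2^{m^{1-\delta}+1}$ and the same constant check $\epsilon\log\kappa=2-o(1)$. You also spell out $\|x'\|_0=\ell-|I|\ge\lfloor n/2\rfloor+m-m^{1-\delta}>\lfloor n/2\rfloor$, which the paper leaves implicit.
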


\begin{proof}
At first, note that given any $J\subseteq [n]$ such that $J \setminus I_x \neq \emptyset$, the set $\{a_J =1_J|a\in A\}$ is empty, since $a_{[n]\setminus I_x}$ is the vector of all zeros. Thus $I$ must be a subset of $I_x$ and $|I|\leq m$. Now  treat every $a\in A$ as a set from the set family $\binom{[\ell]}{m}$, by the law of total probability,
	\begin{align*}
	\Pr_{\bm{a}\sim \binom{[\ell]}{m}}[\bm{a}_I =\bm{1}_I] 
	&= \Pr_{\bm{a}\sim \binom{[\ell]}{m}}[\bm{a}_I =\bm{1}_I|\bm{a}\in A]\cdot \Pr_{\bm{a}\sim \binom{[\ell]}{m}}[\bm{a}\in A]+\Pr_{\bm{a}\sim \binom{[\ell]}{m}}[\bm{a}_I =\bm{1}_I|\bm{a}\not\in A]\cdot \Pr_{\bm{a}\sim \binom{[\ell]}{m}}[\bm{a}\not\in A]\\
	&\geq \Pr_{\bm{a}\sim \binom{[\ell]}{m}}[\bm{a}_I =\bm{1}_I|\bm{a}\in A]\cdot \Pr_{\bm{a}\sim \binom{[\ell]}{m}}[\bm{a}\in A] \\
	&=\Pr_{\bm{a}\sim A}[\bm{a}_I =\bm{1}_I]\cdot \Pr_{\bm{a}\sim \binom{[\ell]}{m}}[\bm{a}\in A],
	\end{align*}
	thus we have 
    \begin{align*}
	    2^{-m^{1-\delta}-1}
        &\leq \Pr_{\bm{a}\sim \binom{[\ell]}{m}}[\bm{a}\in A] \le \frac{\Pr_{\bm{a}\sim \binom{[\ell]}{m}}[\bm{a}_I =\bm{1}_I]}{\Pr_{\bm{a}\sim A}[\bm{a}_I =\bm{1}_I]} \leq \frac{\binom{\ell-|I|}{m-|I|}/\binom{\ell}{m}}{\kappa^{-(1-\epsilon)|I|}}\\
        &= \frac{\prod_{i=0}^{|I|-1} \frac{m-i}{\ell-i}}{\kappa^{-(1-\epsilon)|I|}}\leq  \frac{\kappa^{-|I|}}{\kappa^{-(1-\epsilon)|I|}}= \kappa^{-\epsilon|I|}.
	\end{align*}
	Thus , $|I|\leq  (m^{1-\delta}+1)/(\epsilon \log\kappa)=(m^{1-\delta}+1)/(2-o(1))\leq  m^{1-\delta}$ where $\kappa:=\ell/m, \ell =\lfloor n/2 \rfloor+m,\epsilon=1/\log m$.
\end{proof}

Now we are ready to show after excluding the kernel $I$ out of $A$, we obtain  $A'$, and $A'$ is spread thus satisfying as claimed below.
\begin{claim}\label{claim3.7}
    Recall that $a'$ is the indicator vector  for $I$ in $[n]$ and  $x'=x-a'$. Define $A'$ to be the set $\{a\in A\mid a_I =\bm{1}_I\} -a'\subseteq (\mathbb{S}_m(x)\cap Y_1)-x' \subseteq Y_1-x'$, then $A'$ is $\kappa^{1-\epsilon}$-spread thus $(m^{1+\Theta(\frac{\log\log m}{\log m})}/n, o(1))$-satisfying, so is the set $Y_1 -x'$.
\end{claim}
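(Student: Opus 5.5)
Three things need checking: the spreadness of $A'$ follows from the maximality of the kernel $I$; the robust sunflower bound (Lemma \ref{lem2.3}) then gives satisfiability at the claimed $p$; and satisfiability passes upward from $A'$ to $Y_1-x'$.

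\textbf{Membership.} Each $a'' \in A'$ has the form $a-a'$ with $a\in A$ and $a_I=\bm{1}_I$. Since $a = y - x$ for some $y\in \mathbb{S}_m(x)\cap Y_1$ with $I_y\subset I_x$, we get
\[
a - a' = (y-x) - (x'-x) = y - x'
\]
over $\mathbb{F}_2$. So $A'\subseteq Y_1-x'$, and as a set $a''$ is $I_a\setminus I$, of size $m-|I|$.

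\textbf{Spreadness.} Let $J\subseteq [n]\setminus I$ be nonempty; sets $J$ meeting $I$ have probability $0$. Write $\mathcal{E}_I$ for the event $\bm{a}_I=\bm{1}_I$ under $\bm{a}\sim A$. Then
\[
\Pr_{\bm{a}''\sim A'}[J\subseteq \bm{a}''] = \frac{\Pr_{\bm{a}\sim A}[\bm{a}_{I\cup J}=\bm{1}]}{\Pr_{\bm{a}\sim A}[\mathcal{E}_I]}.
\]
Since $I$ is the largest set violating $\kappa^{1-\epsilon}$-spreadness, the strictly larger set $I\cup J$ satisfies it, so the numerator is $<\kappa^{-(1-\epsilon)(|I|+|J|)}$. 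The denominator is $\ge \kappa^{-(1-\epsilon)|I|}$ by the choice of $I$. The ratio is therefore $<\kappa^{-(1-\epsilon)|J|}$, so $A'$ is $\kappa^{1-\epsilon}$-spread. Here ``largest'' must mean maximum cardinality among violators; I would note that $I=\emptyset$ always violates trivially, so a maximal violator exists.

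\textbf{Satisfiability.} $A'$ is a uniform family of sets of size $r' = m-|I|\le m$, so Lemma \ref{lem2.3} applies. With $\varepsilon = 1/\log m = o(1)$ (a separate $\varepsilon$ from the $\epsilon$ above), it suffices to have
\[
\kappa^{1-\epsilon} \ge \frac{C}{p}\,\log\!\bigl(m\log m\bigr),
\quad\text{i.e.}\quad
p \ge \frac{C\log(m\log m)}{\kappa^{1-\epsilon}}.
\]
Now $\kappa=\ell/m=\Theta(n/m)=\Theta(m^2)$ because $n=m^3$, so $\kappa^{\epsilon} = m^{\Theta(1/\log m)}$, which is $\Theta(1)$ if logs are base $2$. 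Hence the required $p$ is of order
\[
\frac{m}{n}\cdot \kappa^{\epsilon}\cdot O(\log m) = \frac{m^{1+\Theta(\log\log m/\log m)}}{n},
\]
since $\log m = m^{\log\log m/\log m}$. This matches the claimed parameter $(m^{1+\Theta(\log\log m/\log m)}/n,\, o(1))$.

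\textbf{Monotonicity.} Satisfiability is monotone under taking supersets of the family: if some $A\in A'$ lies inside $\bm{R}$, then that same set lies in $Y_1-x'\supseteq A'$. So $Y_1-x'$ inherits the property.

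\textbf{Main obstacle.} The hard step is the parameter bookkeeping. One must confirm $\kappa^{\epsilon}$ stays within $m^{O(\log\log m/\log m)}$, and if the constants are tight, choose $\epsilon$ so that $\epsilon\log\kappa = \Theta(\log\log m)$. One must also confirm that $r=pn$ used later matches this $p$. The spreadness argument itself is routine.
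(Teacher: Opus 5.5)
Your proposal is correct and follows the paper's proof: it uses the same conditioning identity $\Pr_{A'}[J\subseteq \bm{a}''] = \Pr_A[\bm{a}_{I\cup J}=\bm{1}]/\Pr_A[\bm{a}_I=\bm{1}_I]$ with maximality of the kernel $I$ (you argue directly where the paper argues by contradiction), then the same application of Lemma~\ref{lem2.3} with $\kappa^{\epsilon}=\Theta(1)$ and $\log((m-|I|)/\varepsilon)=\Theta(\log m)$. Your explicit checks of $A'\subseteq Y_1-x'$ and of upward monotonicity of satisfiability fill in steps the paper leaves implicit; also, inclusion-maximality of $I$ already suffices, since $I\cup J\supsetneq I$, so ``largest'' need not mean maximum cardinality.
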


\begin{proof}
Assume for contradiction that $A'$ is not $\kappa^{1-\epsilon}$-spread. This means there is a non-empty $J\subseteq [n]\setminus I$ such that 
	\[
	\Pr_{\bm{a}\sim A'}[\bm{a}_J =\bm{1}_J]> \kappa^{-(1-\epsilon)|J|}.
	\]
	We now show that $\Pr_{\bm{a}\sim A}[\bm{a}_{I\cup J} =\bm{1}_{I\cup J}]> \kappa^{-(1-\epsilon)|{I\cup J}|}$ , which would contradict that $I$ is maximal such set.
	\begin{align*}
	\Pr_{\bm{a}\sim A}[\bm{a}_{I\cup J} =\bm{1}_{I\cup J}] 
	&= \Pr_{\bm{a}\sim A}[\bm{a}_{I} =\bm{1}_{I}] \cdot \Pr_{\bm{a}\sim A}[\bm{a}_{J} =\bm{1}_{J}|\bm{a}_{I} =\bm{1}_{I}] \\
	&= \Pr_{\bm{a}\sim A}[\bm{a}_{I} =\bm{1}_{I}] \cdot \Pr_{\bm{a}\sim A'}[\bm{a}_{J} =\bm{1}_{J}] \\
	&>  \kappa^{-(1-\epsilon)|I|} \cdot  \kappa^{-(1-\epsilon)|J|} \\
	&\geq  \kappa^{-(1-\epsilon)|{I\cup J}|}
	\end{align*}
Finally, we verify the parameters. Since $A'\subseteq \binom{[n]}{m-|I|}$ is $\kappa^{1-\epsilon}$-spread, by Lemma \ref{lem2.3}, $A'$ is $(p,\epsilon)$ satisfying where  $p =C/\kappa^{1-\epsilon} \cdot \log((m-|I|)/\epsilon) =\Theta(m/n)\cdot \Theta(\log m)=m^{1+\Theta(\frac{\log\log m}{\log m})}/n$ as required.
\end{proof}

Note that Claim \ref{claim3.6} and Claim \ref{claim3.7} together imply Lemma \ref{lem3.5}. Now we can use Lemma \ref{lem3.5}  and Claim \ref{claim3.4} to prove Lemma \ref{lem3.1} as follows.

\begin{proof}[Proof of  Lemma \ref{lem3.1}]
We define our mirror set to be $M := \{x': x \in X_{1.1}\} \subseteq \texttt{Maj}^{-1}(1)$ where we generate each $x'$ by applying
Lemma \ref{lem3.5} to each $x \in X_{1.1}$ of Claim \ref{claim3.4}. Each $x' \in M$  could be generated from any $x \in X_{1.1}$ with $\|x-x'\|_0 \leq m^{1-\delta}$. Hence the size of  $M$ is at least
\begin{align*}
\frac{|X_{1.1}|}{\sum_{i=0}^{m^{1-\delta}}\binom{|I_x|}{i}}&\geq |X_{1.1}| \cdot 2^{-m^{1-\Theta(\delta)+\Theta(\frac{\log\log m}{\log m})}} \geq |X_{1.1}| \cdot 2^{-m^{1-\Theta(\delta)}} \\
&\geq 2^n\cdot 2^{-m^{1-\delta}-\Theta(\log m)}\cdot 2^{-m^{1-\Theta(\delta)}}  \geq 2^n\cdot 2^{-m^{1-\Theta(\delta)}}
\end{align*}
since $\delta$ is $\omega (\frac{\log\log m}{\log m})$.
\end{proof}

\section{Conclusion and discussion}\label{section4}

We provide a  proof of top-down depth-four lower bound for Majority function by constructing a simple mirror set. This is done by avoiding the flipping in the case of Parity function. But this case-by-case analysis for different functions is still unsatisfying. The first question is that can we provide a  proof for general case unifying the cases like Parity, Majority and potentially all  functions with large average sensitivity? Another interesting direction is to consider replacing the gates in the bottom layer to modular gates.  Riazanov, Sofronova, and Sokolov \cite{DBLP:conf/innovations/RiazanovS026} recently prove lower bounds for circuits of type $ \vee \circ (\wedge \circ \vee \circ A)$ where $A$ is a affine map.  This type of circuit is just the disjunction of composition of a CNF and an affine map.  They manage to prove such lower bound for the middle slice function. So maybe it's possible to extend our results to depth-four circuits composed with  an affine map at the bottom. Finally, the ultimate question is that can we prove a top-down lower bound beyond depth-four? The main obstacle of the current approach is the way of constructing the mirror set. We construct current mirror set out of $f^{-1}(1)$ which allows us to use every element in $f^{-1}(1)$, it is not clear how to do this if we are asked to construct a mirror set out of some strict subset of $f^{-1}(1)$, perhaps new ideas are needed.

\bibliographystyle{alpha}
\bibliography{hw}

\end{document}

%% file: mycmd.tex
\newtheorem{theorem}{Theorem}[section]

\newtheorem{lemma}[theorem]{Lemma}

\theoremstyle{definition}
\newtheorem{defn}[theorem]{Definition}

\newtheorem{claim}[theorem]{Claim}

\numberwithin{equation}{section}

\DeclareMathOperator{\E}{\mathbb{E}}